\documentclass[10pt, conference, letterpaper]{IEEEtran}
\IEEEoverridecommandlockouts

\usepackage[letterpaper, top=0.75in, bottom=1.05in, left=0.63in, right=0.63in, columnsep=0.22in]{geometry}
\usepackage{amsmath,amsfonts,amssymb,amsthm,bm}
\usepackage{graphicx}
\usepackage{cite}
\usepackage{algorithm}
\usepackage{algorithmic}
\usepackage{booktabs}

\newtheorem{lemma}{Lemma}
\newtheorem{theorem}{Theorem}
\newtheorem{remark}{Remark}

\begin{document}

\title{Physics-Aware Power Optimization for Rydberg Quantum Arrays via Barankin Bound Minimization}

\author{Yixiao~Li,
        Jiayi~Zhang,~\IEEEmembership{Senior~Member,~IEEE,}
        Bokai~Xu,
        and Yin~Zhang%
\thanks{This work was supported in part by the Fundamental Research Funds for the Central Universities under Grant 2025YJS017. Y. Li, J. Zhang, B. Xu, and Y. Zhang are with the State Key Laboratory of Advanced Rail Autonomous Operation and the School of Electronic and Information Engineering, Beijing Jiaotong University. (Corresponding authors: Jiayi Zhang; Bokai Xu.)
\textit{}}%
}

\maketitle

\begin{abstract}
Rydberg-atom quantum uniform linear arrays (RAQ-ULAs) offer a promising sensing architecture for direction-of-arrival (DOA) estimation in terahertz (THz) beam alignment. Existing studies often model the quantum receiver as a macroscopic linear block and rely on the Cramer-Rao bound (CRB) for array evaluation or power allocation. However, as a local variance bound, the CRB cannot capture threshold breakdown caused by spatial ambiguities in low-signal-to-noise-ratio (SNR) regimes. This paper proposes a physics-aware power optimization framework for RAQ-ULAs based on stochastic Barankin bound minimization. We derive a closed-form stochastic multipoint Barankin bound (BRB) matrix under a low-SNR integrability condition and introduce an arcsine-prior calibration to characterize bounded-domain error saturation. By incorporating a Lindblad-guided electromagnetically induced transparency (EIT) readout model, we couple the ambiguity-sensitive BRB with photon shot noise, power broadening, and laser Rabi frequencies. A CRB-optimized analytical baseline is further derived to expose the limitation of local-SNR-based allocation. To solve the resulting nonconvex problem, we develop a backtracking majorization-minimization (MM) algorithm with a Lipschitz-based quadratic surrogate. The algorithm ensures monotonic decrease and converges to a feasible stationary point. Simulations show that the proposed framework predicts threshold breakdown more accurately than the CRB and enlarges the reliable operating region under severe THz attenuation.
\end{abstract}

\section{Introduction} \label{sec:intro}

Emerging sixth-generation (6G) communication systems are expected to leverage millimeter-wave (mmWave) and terahertz (THz) bands to support ultra-high data rates \cite{Ai_2025_6G}. However, high-frequency propagation is subject to severe free-space path loss and molecular absorption, which makes accurate beam alignment critical for initial access \cite{Heath_2016}. In the initial access phase of THz communications, omnidirectional or quasi-omnidirectional antenna patterns provide negligible beamforming gain, which, combined with severe propagation attenuation, induces an acute link-budget deficit. Consequently, the receiver operates in an extremely low-signal-to-noise-ratio (SNR) regime where directional beam alignment must be accomplished before high-gain directional data transmission can begin. At THz frequencies, conventional metallic receivers are further limited by thermal noise, hardware impairments, and sensitivity constraints. Rydberg-atom-based quantum receivers have therefore emerged as a promising alternative, exploiting electromagnetically induced transparency (EIT) and Autler-Townes (AT) splitting to transduce radio-frequency (RF) electric fields with high sensitivity. These advantages motivate the use of Rydberg-atom quantum uniform linear arrays (RAQ-ULAs) for direction-of-arrival (DOA) estimation, where the large transition dipole moments of Rydberg states provide a sensitive spatial-sensing mechanism \cite{Gong_RAQR_2026, Liu_2022, Richardson_2025}.

Despite these advances, existing RAQ-ULA studies remain limited in both physical modeling and performance characterization. First, many macroscopic array models represent the quantum receiver as a linear block with fixed equivalent gain and noise variance \cite{Meyer_2021}. Although this abstraction enables tractable array-level analysis, it overlooks that the macroscopic conversion gain and optical photon shot noise (PSN) are coupled through the probe and coupling laser Rabi frequencies. In practical EIT readout, increasing the probe intensity can strengthen the optical response, but it also raises the shot-noise floor and induces power broadening, thereby degrading the transparency window \cite{Fleischhauer_2005}. Therefore, power allocation in RAQ-ULAs cannot be accurately characterized by treating the quantum gain and noise variance as independent constants.

Second, array performance is commonly characterized by the Cramer-Rao bound (CRB) \cite{Krim_1996}. The CRB is derived from the Fisher information matrix (FIM) and quantifies the local estimation-error variance in a neighborhood of the true parameter. However, in low-SNR THz initial access, DOA estimators frequently experience threshold breakdown, where the estimate escapes from the main lobe and locks onto finite-aperture ambiguity peaks \cite{VanTrees_2002}. Because the CRB only reflects the local curvature of the likelihood function, it belongs to local variance bounds and inherently cannot capture such nonlocal ambiguity errors, yielding overly optimistic predictions in severely attenuated regimes. In contrast, the Barankin bound (BRB) belongs to the family of large-error global performance bounds that explicitly incorporate finite-separation test points across the array manifold, thereby capturing spatial ambiguities and predicting the threshold breakdown point. Therefore, characterizing bounded-domain error saturation requires a global performance bound with prior calibration consistent with the RAQ-ULA spatial-frequency geometry.

Motivated by these limitations, this paper proposes a physics-aware power optimization framework for RAQ-ULAs based on stochastic BRB minimization. First, we derive a closed-form stochastic multipoint BRB matrix under a low-SNR integrability condition and introduce an arcsine-prior calibration to characterize bounded-domain error saturation. Second, we develop a physics-aware mapping from EIT readout dynamics to the array covariance model, coupling the quantum conversion gain, PSN, power broadening, and laser Rabi frequencies. Based on this mapping, we derive a CRB-optimized baseline to expose the limitation of local-SNR-based optimization in threshold-dominated regimes. Finally, we formulate a nonconvex power allocation problem that minimizes the prior-calibrated BRB under physical laser-power constraints, and solve it using a backtracking majorization-minimization (MM) algorithm with a Lipschitz-based quadratic surrogate. Simulations validate the trade-off between power broadening and PSN, show that the BRB predicts threshold breakdown more accurately than the CRB, and demonstrate that the proposed framework enlarges the reliable operating region under severe THz attenuation.

\section{THz Channel and Stochastic System Model} \label{sec:system}

\begin{figure}[t]
\centering
\includegraphics[width=\columnwidth]{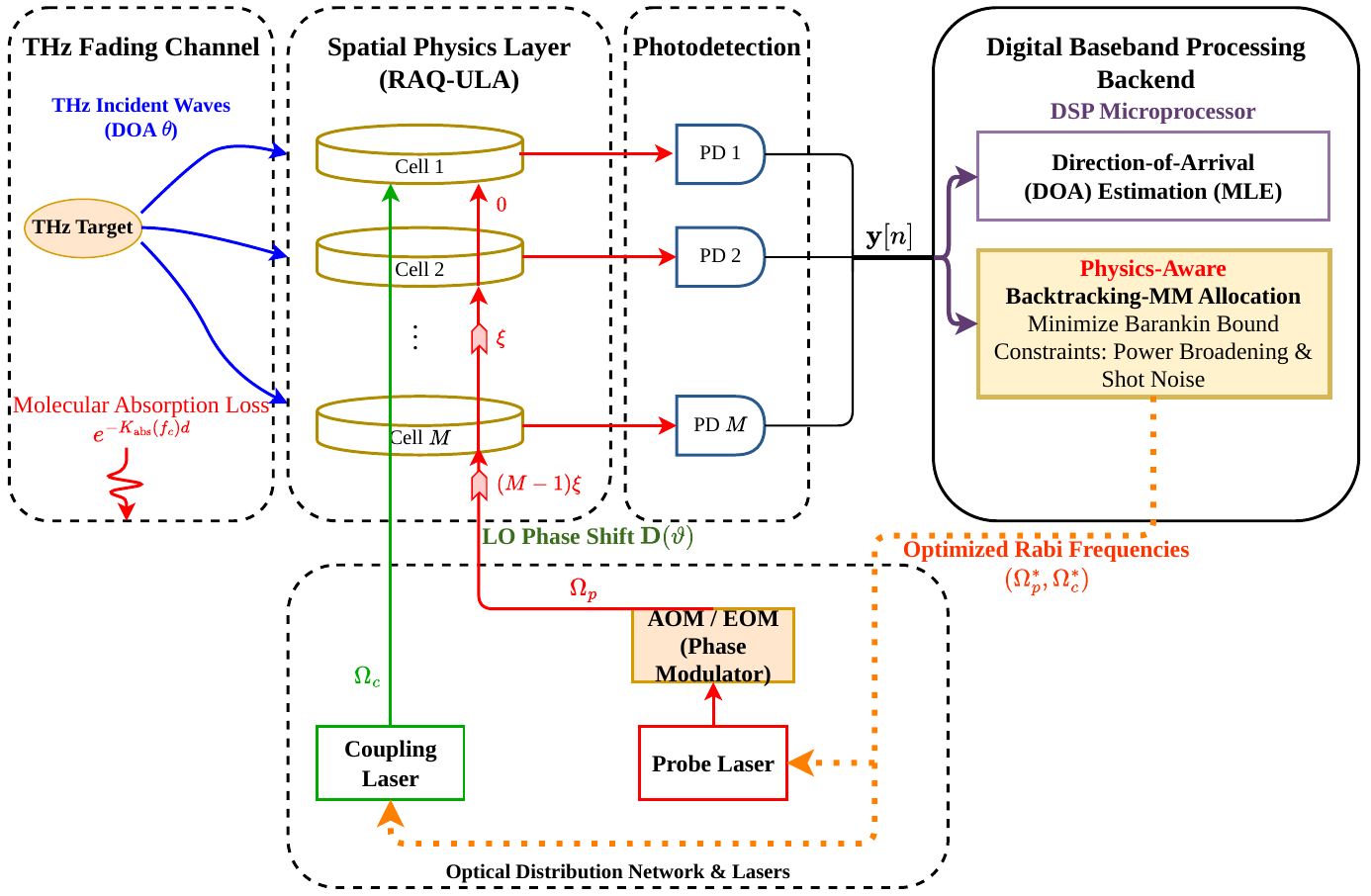}
\caption{System architecture of the proposed RAQ-ULA for 6G beam alignment.}
\label{fig:system_model}
\end{figure}

This section develops the observation model for the RAQ-ULA. As illustrated in Fig.~\ref{fig:system_model}, we first characterize the THz line-of-sight (LoS) channel, then formulate the array geometry under optical phase modulation, and finally derive the baseband stochastic model. This construction links the array covariance matrix to the underlying quantum parameters.

Consider an $M$-element RAQ-ULA receiving a narrowband THz signal with carrier frequency $f_c$ and wavelength $\lambda_c$ from a target at distance $d$ and DOA $\theta$. Although ultra-broadband transmission in THz systems can induce spatial-frequency dispersion (beam squint) across wide frequency bands, initial access and beam alignment typically employ narrow sub-band training pilots or central orthogonal frequency division multiplexing (OFDM) subcarriers \cite{Wang_BeamSquint_2019} within the channel coherence interval, which renders the narrowband array model well-grounded. THz propagation is subject to severe free-space path loss and molecular absorption. According to the Beer-Lambert law, the deterministic LoS channel power gain is modeled as
\begin{equation} \label{eq:thz_channel}
|h(f_c, d)|^2 = \left(\frac{c}{4\pi d f_c}\right)^2 e^{-K_{\text{abs}}(f_c) d},
\end{equation}
where $c$ is the speed of light and $K_{\text{abs}}(f_c)$ denotes the molecular absorption coefficient. In the THz band, $K_{\text{abs}}(f_c)$ depends on environmental factors such as water-vapor partial pressure and ambient temperature, resulting in frequency-selective absorption peaks. Since non-line-of-sight (NLoS) components are typically strongly attenuated by the limited diffraction and scattering of submillimeter waves, the channel is assumed to be LoS-dominant. Although future ultra-massive THz systems may require spherical-wave models in the near-field regime \cite{Xu_NearField_2024}, the compact aperture of the RAQ-ULA supports a far-field approximation. Accordingly, the spatial channel is modeled as rank-one. The received RF signal variance is then given by
$P_s = P_{\text{tx}} G_{\text{tx}} G_{\text{rx}} |h(f_c, d)|^2$, where $P_{\text{tx}}$, $G_{\text{tx}}$, and $G_{\text{rx}}$ denote the transmit power, transmit antenna gain, and receive antenna gain, respectively. Such severe attenuation leads to low-SNR operating conditions, motivating the use of high-sensitivity quantum arrays for reliable beam alignment.

The RAQ-ULA transduces the incident THz electric field into a measurable baseband signal through Rydberg atomic transitions. This process relies on a four-level atomic ladder scheme, typically implemented with alkali atoms such as $^{85}$Rb or $^{133}$Cs confined in localized vapor cells. A probe laser and a coupling laser drive the atoms toward highly excited Rydberg states. When the incident THz wave is resonant with the transition between two adjacent Rydberg states, it induces AT splitting in the EIT spectrum. Since the transition dipole moment of Rydberg atoms scales approximately as $n^2$, where $n$ is the principal quantum number, the receiver can achieve high electric-field sensitivity and broad frequency tunability \cite{xu2025channelestimation}.

From an engineering perspective, current room-temperature Rydberg vapor-cell receivers typically exhibit instantaneous baseband bandwidths on the order of $1$--$10$~MHz, limited by atomic transit times and optical decay rates \cite{Meyer_Wideband_2020}. Furthermore, while the half-wavelength inter-element spacing $\Delta d = \lambda_c/2$ at THz frequencies ($0.1$--$1$~THz corresponds to $\lambda_c \approx 3$--$0.3$~mm) poses substantial fabrication challenges for conventional glass vapor cells, emerging micro-machined atomic vapor cells and integrated photonic waveguides provide viable pathways toward sub-millimeter array integration, or alternatively, sparse array configurations can be utilized to accommodate physical cell dimensions without sacrificing angular resolution \cite{Meyer_Wideband_2020}.

Assume that the vapor cells are uniformly spaced with inter-element spacing $\Delta d$. The spatial steering vector $\mathbf{a}(\theta) \in \mathbb{C}^{M \times 1}$ is defined as
\begin{equation} \label{eq:steering_vector}
\mathbf{a}(\theta) = \left[ {1}, e^{j\mu_s}, \dots, e^{j(M-1)\mu_s} \right]^T,
\end{equation}
where $\mu_s \triangleq \frac{2\pi \Delta d}{\lambda_c} \sin\theta$ denotes the RF spatial frequency. In contrast to conventional metallic arrays, the baseband phase of the RAQ-ULA is also affected by the local oscillator (LO) laser system. When the optical LO network introduces a fixed incidence angle $\vartheta$, the deterministic LO phase-shift matrix $\mathbf{D}(\vartheta) \in \mathbb{C}^{M \times M}$ is given by
\begin{equation} \label{eq:lo_matrix}
\mathbf{D}(\vartheta) = \operatorname{diag}\left\{ {1}, e^{-j\xi}, \dots, e^{-j(M-1)\xi} \right\},
\end{equation}
where $\xi \triangleq \frac{2\pi \Delta d}{\lambda_{\text{LO}}} \sin\vartheta$ and $\lambda_{\text{LO}}$ denotes the effective LO wavelength. The matrix $\mathbf{D}(\vartheta)$ models the optical path delay distributed across vapor cells via polarization-maintaining fibers or free-space splitters. The resulting composite steering vector is written as
$\mathbf{b}(\theta) \triangleq \mathbf{D}(\vartheta)\mathbf{a}(\theta)$.

Following the stochastic array processing model for fluctuating sources \cite{Stoica_1990}, the incident signal $s[n]$ at the $n$-th snapshot is modeled as a zero-mean circularly symmetric complex Gaussian (CSCG) random variable with variance $P_s$. The snapshots are assumed to be independent and identically distributed within the channel coherence interval. The received baseband signal vector $\mathbf{y}[n] \in \mathbb{C}^{M \times 1}$ is expressed as
\begin{equation} \label{eq:signal_model}
\mathbf{y}[n] = \beta_0 \mathbf{b}(\theta) s[n] + \mathbf{w}[n], \quad n=1,\dots,N,
\end{equation}
where $\beta_{0} \triangleq \sqrt{E_{e}}\rho\Phi$ denotes the macroscopic scalar gain. Here, $E_e$ is the electric-field normalization factor, $\rho$ is the microscopic quantum conversion gain, and $\Phi$ is the field-to-voltage conversion coefficient. The additive noise vector follows $\mathbf{w}[n] \sim \mathcal{CN}(\mathbf{0},\sigma^2 \mathbf{I}_M)$, where the spatially uncorrelated covariance reflects independent photodetector readouts across vapor cells. In balanced photodetector architectures, differential detection suppresses classical laser intensity noise, leaving $\sigma^2$ governed by thermal background noise and optical PSN. As shown in Section~\ref{sec:optimization}, both $\rho$ and $\sigma^2$ depend nonlinearly on the laser Rabi frequencies. Consequently, $\mathbf{y}[n] \sim \mathcal{CN}(\mathbf{0},\mathbf{R}_{\mathbf{y}}(\theta))$, where the spatial covariance matrix is
\begin{equation} \label{eq:covariance}
\mathbf{R}_{\mathbf{y}}(\theta) = |\beta_0|^2 P_s \mathbf{b}(\theta)\mathbf{b}^H(\theta) + \sigma^2 \mathbf{I}_M.
\end{equation}

For beam-alignment performance evaluation, the normalized receive beamforming gain $G_{\mathrm{BF}}(\Delta \theta)$ resulting from the DOA estimation error $\Delta \theta = \hat{\theta} - \theta$ can be approximated by the Dirichlet kernel \cite{Heath_2016}. Under the small-error approximation $\sin(\theta+\Delta\theta)-\sin\theta \approx \cos(\theta)\Delta\theta$, it is given by
\begin{equation} \label{eq:gbf}
G_{\mathrm{BF}}(\Delta \theta) \approx \left| \frac{\sin\big(M \pi \frac{\Delta d}{\lambda_c} \cos(\theta) \Delta \theta\big)}{M \sin\big(\pi \frac{\Delta d}{\lambda_c} \cos(\theta) \Delta \theta\big)} \right|^2.
\end{equation}

\section{Stochastic Performance Bounds} \label{sec:bounds}

This section develops stochastic performance bounds for DOA estimation. We first use the stochastic CRB as a local benchmark. To characterize spatial ambiguities and the resulting threshold breakdown in THz initial access, we then derive a closed-form stochastic multipoint BRB under the low-SNR integrability condition. The BRB provides a non-asymptotic characterization of estimation error by accounting for finite-separation uncertainty that cannot be captured by local differential analysis. By applying the Sherman-Morrison matrix inversion lemma to \eqref{eq:covariance}, the inverse covariance matrix is given by
\begin{equation} \label{eq:invR}
\mathbf{R}_{\mathbf{y}}^{-1}(\theta) = \sigma^{-2}\mathbf{I}_M - \nu \mathbf{b}(\theta)\mathbf{b}^H(\theta),
\end{equation}
where the scalar coefficient $\nu$ is defined as
\begin{equation} \label{eq:nu}
\nu \triangleq \frac{|\beta_0|^2 P_s}{\sigma^2(\sigma^2 + M |\beta_0|^2 P_s)}.
\end{equation}
The determinant of the spatial covariance matrix is
$\det(\mathbf{R}_{\mathbf{y}}(\theta)) = \sigma^{2(M-1)}(\sigma^2 + M |\beta_0|^2 P_s)$.

\subsection{Stochastic CRB}

For the zero-mean CSCG observation model, the stochastic CRB for DOA estimation follows from the Slepian--Bangs formula as
\begin{equation} \label{eq:crb}
\mathrm{CRB}_{\mathrm{sto}}(\theta) =
\frac{1}{N \operatorname{tr}\left[
\left(
\mathbf{R}_{\mathbf{y}}^{-1}(\theta)
\frac{\partial \mathbf{R}_{\mathbf{y}}(\theta)}{\partial \theta}
\right)^2
\right]}.
\end{equation}
Substituting \eqref{eq:invR} into \eqref{eq:crb} shows that the Fisher information is governed by the local derivative metric $\dot{\mathbf{b}}^H \boldsymbol{\Pi}_{\mathbf{b}}^{\perp} \dot{\mathbf{b}}$, where $\dot{\mathbf{b}} \triangleq \partial \mathbf{b}(\theta)/\partial\theta$ and $\boldsymbol{\Pi}_{\mathbf{b}}^{\perp} \triangleq \mathbf{I}_M - \frac{1}{M}\mathbf{b}\mathbf{b}^H$ is the orthogonal complement projector. Under the ULA geometry, this metric evaluates to $\dot{\mathbf{b}}^H \boldsymbol{\Pi}_{\mathbf{b}}^{\perp} \dot{\mathbf{b}} = \frac{\pi^2 \Delta d^2}{\lambda_c^2}\frac{M(M^2-1)}{3}\cos^2\theta$, revealing that the local Fisher information scales as $\mathcal{O}(M^3)$. However, this metric captures only local likelihood curvature in the vicinity of the true DOA and ignores finite-separation ambiguity errors across the array sidelobe pattern. Consequently, the CRB yields overly optimistic predictions in attenuated THz scenarios where ambiguity peaks dominate.

\subsection{Stochastic Multipoint Barankin Bound}

To capture threshold breakdown induced by spatial ambiguities, we adopt the Hammersley--Chapman--Robbins formulation with the test-point vector $\mathbf{t} = [\theta^{(1)}-\theta, \dots, \theta^{(L)}-\theta]^T$ \cite{Chaumette_2008}. The test points are selected near the local ambiguity peaks of the finite-aperture RAQ-ULA correlation function. Specifically, candidate points are generated as
\begin{equation} \label{eq:test_points}
\tilde{\theta}^{(l,\pm)} = \arcsin\left( \sin\theta \pm \frac{(2l+1) \lambda_c}{2M \Delta d} \right),
\end{equation}
for $l = 1, 2, \dots, L_0$. These candidate positions correspond to the null-to-null beamwidth intervals and sidelobe crests of the array Dirichlet response. Only candidates satisfying
$\big\vert{} \sin\theta \pm \frac{(2l+1) \lambda_c}{2M \Delta d} \big\vert{} \le 1$
are retained and re-indexed as $\{\theta^{(l)}\}_{l=1}^{L}$.

It is worth noting that the test point selection strategy in \eqref{eq:test_points} relies on the far-field Dirichlet kernel. For extremely large-scale arrays where near-field spherical wave effects become prominent in 6G systems \cite{Liu_NearField_2024, Lu_NearField_2022, Liu_CellFree_2024}, the spatial ambiguity function will expand into the distance domain, requiring a 2D angle-distance test point grid.

\begin{lemma}[Stochastic Barankin Matrix] \label{lemma:brb_mat}
For the RAQ-ULA under the CSCG observation model, if $\mathbf{M}_{lr}\succ \mathbf{0}$ for all retained test-point pairs, the entries of the Barankin matrix $\boldsymbol{\mathcal{B}}(\theta) \in \mathbb{R}^{L \times L}$ are given by
\begin{equation} \label{eq:brb_element}
[\boldsymbol{\mathcal{B}}(\theta)]_{lr}
=
\left(
\frac{\det(\mathbf{R}_{\mathbf{y}}(\theta))}
{\det(\mathbf{R}_{\mathbf{y}}^{(l)})\det(\mathbf{R}_{\mathbf{y}}^{(r)})\det(\mathbf{M}_{lr})}
\right)^N,
\end{equation}
where $\mathbf{R}_{\mathbf{y}}^{(l)} \triangleq \mathbf{R}_{\mathbf{y}}(\theta^{(l)})$, and
\begin{equation}
\mathbf{M}_{lr}
\triangleq
(\mathbf{R}_{\mathbf{y}}^{(l)})^{-1}
+
(\mathbf{R}_{\mathbf{y}}^{(r)})^{-1}
-
\mathbf{R}_{\mathbf{y}}^{-1}(\theta).
\end{equation}
\end{lemma}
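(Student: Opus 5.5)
The plan is to start from the Hammersley--Chapman--Robbins/Barankin construction. The Barankin matrix entries are the second moments of the likelihood ratios evaluated at the test points:
\begin{equation*}
[\boldsymbol{\mathcal{B}}(\theta)]_{lr} = \mathbb{E}_{\theta}\!\left[\frac{p(\mathbf{Y};\theta^{(l)})\,p(\mathbf{Y};\theta^{(r)})}{p^2(\mathbf{Y};\theta)}\right],
\end{equation*}
where $\mathbf{Y}=[\mathbf{y}[1],\dots,\mathbf{y}[N]]$. The $N$ snapshots are i.i.d.\ under \eqref{eq:signal_model}, so the joint density factorizes and the expectation equals the single-snapshot expectation raised to the $N$-th power. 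This reduces the lemma to a computation with one $M$-dimensional CSCG vector $\mathbf{y}\sim\mathcal{CN}(\mathbf{0},\mathbf{R}_{\mathbf{y}}(\theta))$.

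For one snapshot, I would write the integrand explicitly. With $p(\mathbf{y};\vartheta')=\pi^{-M}\det(\mathbf{R}(\vartheta'))^{-1}\exp(-\mathbf{y}^H\mathbf{R}^{-1}(\vartheta')\mathbf{y})$, the expectation becomes
\begin{equation*}
\frac{\det(\mathbf{R}_{\mathbf{y}}(\theta))^2}{\det(\mathbf{R}_{\mathbf{y}}^{(l)})\det(\mathbf{R}_{\mathbf{y}}^{(r)})}\int \frac{1}{\pi^M\det(\mathbf{R}_{\mathbf{y}}(\theta))}\exp\!\big(-\mathbf{y}^H\mathbf{M}_{lr}\mathbf{y}\big)\,d\mathbf{y}.
\end{equation*}
The exponent collects as $-\mathbf{y}^H[(\mathbf{R}_{\mathbf{y}}^{(l)})^{-1}+(\mathbf{R}_{\mathbf{y}}^{(r)})^{-1}-2\mathbf{R}_{\mathbf{y}}^{-1}(\theta)+\mathbf{R}_{\mathbf{y}}^{-1}(\theta)]\mathbf{y}=-\mathbf{y}^H\mathbf{M}_{lr}\mathbf{y}$. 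Here one factor $\mathbf{R}_{\mathbf{y}}^{-1}(\theta)$ comes from the true density, and two factors are cancelled by the squared denominator.

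Next, I would apply the standard complex Gaussian integral $\int_{\mathbb{C}^M}\exp(-\mathbf{y}^H\mathbf{A}\mathbf{y})\,d\mathbf{y}=\pi^M/\det(\mathbf{A})$, which is valid precisely when the Hermitian matrix $\mathbf{A}$ satisfies $\mathbf{A}\succ\mathbf{0}$. This is the role of the hypothesis $\mathbf{M}_{lr}\succ\mathbf{0}$, the integrability condition. Substituting gives $\det(\mathbf{R}_{\mathbf{y}}(\theta))/[\det(\mathbf{R}_{\mathbf{y}}^{(l)})\det(\mathbf{R}_{\mathbf{y}}^{(r)})\det(\mathbf{M}_{lr})]$. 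Raising this to the power $N$ yields \eqref{eq:brb_element}.

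Finally, I would check that the entries are real, which places $\boldsymbol{\mathcal{B}}(\theta)$ in $\mathbb{R}^{L\times L}$. Each determinant is of a Hermitian positive definite matrix, hence positive; the explicit form $\det(\mathbf{R}_{\mathbf{y}})=\sigma^{2(M-1)}(\sigma^2+M|\beta_0|^2P_s)$ even shows these are independent of the angle. Symmetry in $l,r$ is evident from the formula.

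The main obstacle is the justification of integrability. Without $\mathbf{M}_{lr}\succ\mathbf{0}$ the integral diverges and the entry is infinite, so the whole derivation is conditional. I would flag that the hypothesis is essential rather than try to prove it. Optionally, I would remark, using \eqref{eq:invR}, that $\mathbf{M}_{lr}=\sigma^{-2}\mathbf{I}_M-\nu(\mathbf{b}_l\mathbf{b}_l^H+\mathbf{b}_r\mathbf{b}_r^H-\mathbf{b}\mathbf{b}^H)$. Its smallest eigenvalue is at least $\sigma^{-2}-2\nu M$, which is positive whenever $M|\beta_0|^2P_s<\sigma^2$. This connects the hypothesis to the low-SNR regime.
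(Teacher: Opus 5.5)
Your proposal is correct and follows essentially the same route as the paper. You write the entry as $\int p(\mathbf{y}|\theta^{(l)})p(\mathbf{y}|\theta^{(r)})/p(\mathbf{y}|\theta)\,\mathrm{d}\mathbf{y}$, collect the exponent into $-\mathbf{y}^H\mathbf{M}_{lr}\mathbf{y}$, evaluate the complex Gaussian integral as $\pi^M/\det(\mathbf{M}_{lr})$ under $\mathbf{M}_{lr}\succ\mathbf{0}$, and raise the result to the $N$-th power. You justify that last step explicitly through factorization of the i.i.d.\ joint density, where the paper gives only a one-line extension; your optional Weyl-type remark linking $\mathbf{M}_{lr}\succ\mathbf{0}$ to $\sigma^2>M|\beta_0|^2P_s$ matches the sufficient condition the paper derives inside its proof.
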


\begin{IEEEproof}
Let $p(\mathbf{y}\vert{}\theta)$ denote the one-snapshot CSCG probability density function. By definition, the $(l,r)$-th entry of $\boldsymbol{\mathcal{B}}(\theta)$ is
\begin{equation} \label{eq:expectation}
\mathbb{E}_{\theta}\left[
\frac{p(\mathbf{y}|\theta^{(l)})p(\mathbf{y}|\theta^{(r)})}
{p^2(\mathbf{y}|\theta)}
\right]
=
\int_{\mathbb{C}^M}
\frac{p(\mathbf{y}|\theta^{(l)})p(\mathbf{y}|\theta^{(r)})}
{p(\mathbf{y}|\theta)}
\mathrm{d}\mathbf{y}.
\end{equation}
Substituting the CSCG densities and collecting the exponential terms, the exponent reduces to $-\mathbf{y}^H\mathbf{M}_{lr}\mathbf{y}$. Hence, \eqref{eq:expectation} becomes
\begin{equation} \label{eq:integral}
\frac{\det(\mathbf{R}_{\mathbf{y}}(\theta))}
{\pi^M \det(\mathbf{R}_{\mathbf{y}}^{(l)})\det(\mathbf{R}_{\mathbf{y}}^{(r)})}
\int_{\mathbb{C}^M}
\exp\left(-\mathbf{y}^H\mathbf{M}_{lr}\mathbf{y}\right)
\mathrm{d}\mathbf{y}.
\end{equation}
The integral is finite when $\mathbf{M}_{lr}\succ\mathbf{0}$. Substituting \eqref{eq:invR} into \eqref{eq:mlr_expansion} gives
\begin{equation} \label{eq:mlr_expansion}
\mathbf{M}_{lr} = \sigma^{-2}\mathbf{I}_M + \nu \mathbf{b}\mathbf{b}^H - \nu (\mathbf{b}_l\mathbf{b}_l^H + \mathbf{b}_r\mathbf{b}_r^H),
\end{equation}
where $\mathbf{b}\triangleq\mathbf{b}(\theta)$ and $\mathbf{b}_i\triangleq\mathbf{b}(\theta^{(i)})$. Since $\nu \mathbf{b}\mathbf{b}^H \succeq \mathbf{0}$, Weyl's inequality gives
$\lambda_{\min}(\mathbf{M}_{lr}) \ge \sigma^{-2} - \nu \lambda_{\max}(\mathbf{b}_l\mathbf{b}_l^H + \mathbf{b}_r\mathbf{b}_r^H)$.
Using $\Vert{}\mathbf{b}_l\Vert{}_2^2=\Vert{}\mathbf{b}_r\Vert{}_2^2=M$, we have
$\lambda_{\max}(\mathbf{b}_l\mathbf{b}_l^H + \mathbf{b}_r\mathbf{b}_r^H) \le 2M$.
Thus, $\mathbf{M}_{lr}\succ\mathbf{0}$ is guaranteed by
$\sigma^{-2}-2M\nu>0$, which is equivalent to
\begin{equation} \label{eq:convergence_cond}
\sigma^2 > M |\beta_0|^2 P_s.
\end{equation}
This condition agrees with the low-SNR THz initial-access regime considered in this work. Although the mathematical form of $\mathbf{M}_{lr}$ in \eqref{eq:mlr_expansion} and the condition in \eqref{eq:convergence_cond} resemble those of classical ULA processing, the fundamental distinction in RAQ-ULAs is that the scalar gain $\beta_0 = \sqrt{E_e}\rho\Phi$ and noise variance $\sigma^2 = N_{\mathrm{bg}} + c_{\mathrm{shot}}\Omega_p^2$ are physically governed by the microscopic quantum EIT readout dynamics and laser Rabi frequencies $(\Omega_p, \Omega_c)$. In classical arrays, gain and noise are fixed constants, making performance bounds passive evaluation tools; in RAQ-ULAs, the bound is explicitly coupled to laser powers, enabling active quantum-state optimization to reshape the ambiguity landscape. The integral then evaluates to $\pi^M/\det(\mathbf{M}_{lr})$, and extending the result to $N$ independent snapshots gives \eqref{eq:brb_element}.
\end{IEEEproof}

To account for the bounded DOA domain $[-\pi/2, \pi/2]$, we incorporate a prior-based calibration into the BRB \cite{Renaux_Prior_2008}. In the pure-noise regime, the estimator has no reliable directional information. Since the RAQ-ULA spatial-frequency geometry is naturally represented in the $\sin\theta$ domain, an uninformative estimate in this domain induces an arcsine-transformed prior for $\theta$. The corresponding maximum uncertainty is
$V_{\mathrm{arc}} = \pi^2/4 - 2$.
The prior-calibrated stochastic BRB is then defined as
\begin{equation} \label{eq:cbrb}
\begin{split}
\mathrm{BRB}_{\mathrm{sto},\delta}^{c}(\theta) &= \bigg( \Big( \mathbf{t}^T \big(\boldsymbol{\mathcal{B}}(\theta) - \mathbf{1}_L\mathbf{1}_L^T \\
&\quad + \delta \mathbf{I}_L \big)^{-1} \mathbf{t} \Big)^{-1} + V_{\mathrm{arc}}^{-1} \bigg)^{-1},
\end{split}
\end{equation}
where $\delta>0$ is a small regularization constant for maintaining invertibility in ill-conditioned regimes.

\begin{remark}[Threshold Sensitivity] \label{rem:brb_threshold}
Unlike the local CRB, the BRB depends on finite-separation likelihood ratios through $\mathbf{M}_{lr}$, thereby capturing spatial-ambiguity effects. When the signal power drops, random noise fluctuations cause the likelihood values at sidelobe test points to compete with the mainlobe, triggering threshold breakdown. Since $\beta_0$ incorporates the quantum conversion gain, the BRB links threshold behavior directly to microscopic laser parameters, providing an actionable metric for proactive array power optimization. As the incident SNR decreases, the prior-calibrated BRB approaches the saturation level $V_{\mathrm{arc}}$, providing a more reliable characterization of threshold breakdown than the local CRB.
\end{remark}

\section{Physics-Aware Power Optimization} \label{sec:optimization}

Section~\ref{sec:bounds} shows that mitigating threshold breakdown requires a global performance criterion beyond the local CRB. This section formulates a physics-aware laser-power allocation problem by mapping microscopic quantum dynamics to macroscopic array parameters, enabling laser-power variables to jointly shape the array covariance and the ambiguity-sensitive BRB. We then develop an MM framework to optimize the resulting nonconvex BRB objective.

\subsection{Quantum Macroscopic Mapping and CRB Baseline}

The multilevel Rydberg atomic dynamics are governed by the Lindblad master equation \cite{Meyer_2021}. Under two-photon resonance, i.e., $\Delta_p=0$ and $\Delta_c=0$, the equivalent quantum conversion gain $\rho$ and noise variance $\sigma^2$ can be approximated as coupled functions of the probe and coupling laser Rabi frequencies, $\Omega_p$ and $\Omega_c$, respectively:
\begin{equation} \label{eq:gain_eit}
\rho(\Omega_p, \Omega_c) \approx \frac{\kappa \Omega_p \Omega_c}{\Omega_c^2 + \Gamma_{\mathrm{EIT}}^2 + \Omega_p^2}, \quad
\sigma^2(\Omega_p) \approx N_{\mathrm{bg}} + c_{\mathrm{shot}} \Omega_p^2.
\end{equation}
The term $\Omega_p^2$ in the denominator captures probe-induced power broadening. Specifically, an excessively strong probe field perturbs atomic coherence and the dark state, narrowing the transparency window and degrading the optical response. Meanwhile, increasing $\Omega_p$ raises the PSN floor, where $c_{\mathrm{shot}}$ is the shot-noise coefficient and $N_{\mathrm{bg}}$ denotes the background thermal noise. Therefore, laser-power allocation must balance quantum conversion gain, power broadening, and PSN.

Under conventional CRB-based optimization, the design reduces to maximizing the effective array quantum SNR metric $\eta = |\beta_0|^2/\sigma^2 \propto \rho^2/\sigma^2$. Let $x=\Omega_p^2$ and $y=\Omega_c^2$. Ignoring constant scaling factors, the objective becomes
\begin{equation} \label{eq:snr_metric}
\eta(x, y) =
\frac{x y}{(y + \Gamma_{\mathrm{EIT}}^2 + x)^2 (N_{\mathrm{bg}} + c_{\mathrm{shot}}x)}.
\end{equation}

\begin{theorem}[CRB-Optimized Analytical Baseline] \label{thm:optimal_power}
For any fixed coupling intensity $y>0$, the CRB-based SNR metric in \eqref{eq:snr_metric} has a unique maximizer over $x>0$, given by
\begin{equation} \label{eq:closed_form_x}
x^*
=
\frac{
-N_{\mathrm{bg}}
+
\sqrt{
N_{\mathrm{bg}}^2
+
8c_{\mathrm{shot}}N_{\mathrm{bg}}
\left(y+\Gamma_{\mathrm{EIT}}^2\right)
}
}
{4c_{\mathrm{shot}}}.
\end{equation}
\end{theorem}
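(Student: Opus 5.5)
Fix $y>0$ and write $A\triangleq y+\Gamma_{\mathrm{EIT}}^2>0$, so that \eqref{eq:snr_metric} becomes $\eta(x)=\frac{y\,x}{(A+x)^2(N_{\mathrm{bg}}+c_{\mathrm{shot}}x)}$ on $x>0$. The constant factor $y$ does not affect the maximizer, so we work with the logarithm
\begin{equation*}
g(x)\triangleq\ln x-2\ln(A+x)-\ln(N_{\mathrm{bg}}+c_{\mathrm{shot}}x).
\end{equation*}
It is smooth on $(0,\infty)$ and has the same critical points as $\eta$. The plan is to show that $g'$ has exactly one zero on $(0,\infty)$, that $g'$ is positive before it and negative after it, and that this zero equals \eqref{eq:closed_form_x}. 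We assume $N_{\mathrm{bg}}>0$ and $c_{\mathrm{shot}}>0$ throughout, as the physical model implies.

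The first step is to differentiate:
\begin{equation*}
g'(x)=\frac{1}{x}-\frac{2}{A+x}-\frac{c_{\mathrm{shot}}}{N_{\mathrm{bg}}+c_{\mathrm{shot}}x}.
\end{equation*}
Multiplying by the positive quantity $x(A+x)(N_{\mathrm{bg}}+c_{\mathrm{shot}}x)$ gives a numerator $q(x)$ with the same sign as $g'(x)$. Expanding,
\begin{equation*}
q(x)=(A+x)(N_{\mathrm{bg}}+c_{\mathrm{shot}}x)-2x(N_{\mathrm{bg}}+c_{\mathrm{shot}}x)-c_{\mathrm{shot}}x(A+x).
\end{equation*}
Collecting powers of $x$, the $c_{\mathrm{shot}}Ax$ terms cancel and the quadratic terms combine, leaving
\begin{equation*}
q(x)=-2c_{\mathrm{shot}}x^2-N_{\mathrm{bg}}x+AN_{\mathrm{bg}}.
\end{equation*}
Carrying out this expansion carefully, and in particular checking the cancellation, is the only place an algebra slip could occur, so I would verify it by re-expanding. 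This step is the main obstacle, though a mild one.

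Next, $q$ is a concave quadratic with $q(0)=AN_{\mathrm{bg}}>0$ and $q(x)\to-\infty$ as $x\to\infty$. Its two roots have product $-AN_{\mathrm{bg}}/(2c_{\mathrm{shot}})<0$, so exactly one root is positive. The quadratic formula gives that root as
\begin{equation*}
x^*=\frac{-N_{\mathrm{bg}}+\sqrt{N_{\mathrm{bg}}^2+8c_{\mathrm{shot}}N_{\mathrm{bg}}A}}{4c_{\mathrm{shot}}},
\end{equation*}
which is exactly \eqref{eq:closed_form_x} once $A$ is substituted back.

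Finally, $q>0$ on $(0,x^*)$ and $q<0$ on $(x^*,\infty)$, so $\eta$ is strictly increasing and then strictly decreasing. Also $\eta(x)\to0$ both as $x\to0^+$ and as $x\to\infty$, since the denominator grows like $x^3$. Hence $x^*$ is the unique global maximizer of $\eta$ over $x>0$, which completes the argument.
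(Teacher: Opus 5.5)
Your proof is correct and follows the paper's route: take $\ln\eta$, clear the positive denominators to get the quadratic $2c_{\mathrm{shot}}x^2+N_{\mathrm{bg}}x-(y+\Gamma_{\mathrm{EIT}}^2)N_{\mathrm{bg}}=0$ (your $q$ is its negative), isolate its single positive root, and conclude via the boundary limits of $\eta$. The differences are cosmetic: you count positive roots via the negative product of roots rather than Descartes' rule of signs, and your explicit sign analysis of $q$ states the increasing-then-decreasing behavior directly.
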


\begin{IEEEproof}
Taking the logarithm of $\eta(x,y)$ and differentiating with respect to $x$ gives
\begin{equation} \label{eq:derivative}
\frac{\partial \ln \eta(x,y)}{\partial x}
=
\frac{1}{x}
-
\frac{2}{y+\Gamma_{\mathrm{EIT}}^2+x}
-
\frac{c_{\mathrm{shot}}}{N_{\mathrm{bg}}+c_{\mathrm{shot}}x}.
\end{equation}
Setting \eqref{eq:derivative} to zero and clearing the denominators yields
$2c_{\mathrm{shot}}x^2+N_{\mathrm{bg}}x-(y+\Gamma_{\mathrm{EIT}}^2)N_{\mathrm{bg}}=0$.
Since all parameters are positive, Descartes' rule of signs implies exactly one positive root. Moreover, $\eta(x,y)\to0$ as $x\to0^+$ and $x\to\infty$, so this stationary point is the unique maximizer over $x>0$.
\end{IEEEproof}

Although Theorem~\ref{thm:optimal_power} gives the optimal allocation for local-SNR maximization, it ignores finite-separation ambiguity errors. At extremely low RF power $P_s$, the SNR metric $\eta(x,y)$ may still cause premature threshold breakdown because the ambiguity amplification in \eqref{eq:brb_element} across sidelobes is not considered. The CRB-optimized solution therefore serves as an analytical baseline rather than the final design criterion in ambiguity-limited regimes.

\begin{algorithm}[t]
\caption{BRB Minimization via Backtracking MM} \label{alg:mm}
\begin{algorithmic}[{1}]
\REQUIRE $\mathcal{F}$, $\mathbf{u}^{(0)}\in\mathcal{F}$, $L_0>0$, $\tau>1$, $\epsilon_{\mathrm{tol}}$.
\ENSURE $(\Omega_p^*,\Omega_c^*)$.
\STATE Initialize $k=0$.
\REPEAT
    \STATE Evaluate $\nabla J(\mathbf{u}^{(k)})$ and set $L_k=L_0$.
    \LOOP
        \STATE $\mathbf{z}=\mathcal{P}_{\mathcal{F}}\!\left(\mathbf{u}^{(k)}-\frac{1}{L_k}\nabla J(\mathbf{u}^{(k)})\right)$.
        \IF{$J(\mathbf{z}) \le g_{L_k}(\mathbf{z}\mid\mathbf{u}^{(k)})$}
            \STATE \textbf{break}
        \ELSE
            \STATE $L_k \leftarrow \tau L_k$.
        \ENDIF
    \ENDLOOP
    \STATE $\mathbf{u}^{(k+1)} \leftarrow \mathbf{z}$, $k \leftarrow k+1$.
\UNTIL{$\|\mathbf{u}^{(k)}-\mathbf{u}^{(k-1)}\|_2 \le \epsilon_{\mathrm{tol}}$.}
\STATE \textbf{return} $\Omega_p^*=\sqrt{[\mathbf{u}^{(k)}]_1}$,
\STATE \hspace{1.6em} $\Omega_c^*=\sqrt{[\mathbf{u}^{(k)}]_2}$.
\end{algorithmic}
\end{algorithm}

\subsection{BRB Minimization via Lipschitz-MM Framework}

To enlarge the reliable operating region, we formulate the laser-power allocation problem by minimizing the prior-calibrated BRB in \eqref{eq:cbrb}. Defining $\mathbf{u}=[x,y]^T$, the objective is
$J(\mathbf{u}) \triangleq \mathrm{BRB}_{\mathrm{sto},\delta}^{c}(\mathbf{u})$.
To keep the EIT readout within its valid operating regime, the probe intensity is constrained relative to the coupling intensity. A small lower bound $\varepsilon>0$ is introduced to avoid degenerate zero-power points. Given the hardware power budget $P_{\mathrm{total}}$ and the EIT boundary coefficient $\gamma\in(0,1)$, the feasible set is the closed convex polytope
\begin{equation} \label{eq:feasible}
\begin{aligned}
\mathcal{F}=\{\,\mathbf{u}\in\mathbb{R}^2 \mid&
x\ge\varepsilon,\; y\ge\varepsilon,\\
&x+y\le P_{\mathrm{total}},\; x\le\gamma y\,\}.
\end{aligned}
\end{equation}
This set is nonempty when $0<\varepsilon<\frac{\gamma}{1+\gamma}P_{\mathrm{total}}$.

Minimizing $J(\mathbf{u})$ over $\mathcal{F}$ is nonconvex due to the nonlinear EIT mapping and the determinant-ratio structure of the BRB. The regularization term $\delta\mathbf{I}_L$ in \eqref{eq:cbrb} improves numerical stability by avoiding singular matrix inversion in ill-conditioned regimes. Since the regularized objective is smooth over the compact set $\mathcal{F}$, $\nabla J(\mathbf{u})$ is Lipschitz continuous with some finite constant $L>0$. By the descent lemma, we construct the quadratic surrogate
\begin{equation} \label{eq:surrogate}
\begin{aligned}
g_{L_k}(\mathbf{u}\mid\mathbf{u}^{(k)})
&= J(\mathbf{u}^{(k)})
+ \nabla J(\mathbf{u}^{(k)})^T
(\mathbf{u}-\mathbf{u}^{(k)}) \\
&\quad + \frac{L_k}{2}
\|\mathbf{u}-\mathbf{u}^{(k)}\|_2^2.
\end{aligned}
\end{equation}

For any $L_k \ge L$, $g_{L_k}(\mathbf{u} \mid \mathbf{u}^{(k)})$ majorizes $J(\mathbf{u})$ over $\mathcal{F}$ and is tight at $\mathbf{u}^{(k)}$. Dropping constant terms, the surrogate minimizer admits the projected-gradient form
\begin{equation} \label{eq:update_rule}
\mathbf{u}^{(k+1)}
=
\mathcal{P}_{\mathcal{F}}\!\left(
\mathbf{u}^{(k)}
-
L_k^{-1}\nabla J(\mathbf{u}^{(k)})
\right),
\end{equation}
where $\mathcal{P}_{\mathcal{F}}(\cdot)$ denotes the Euclidean projection onto $\mathcal{F}$. Since $\mathcal{F}$ is a 2D convex polygon bounded by linear hardware power constraints, $\mathcal{P}_{\mathcal{F}}(\cdot)$ can be evaluated in closed form via direct boundary clipping. Since the global Lipschitz constant is generally unavailable, we select $L_k$ by backtracking. Specifically, $L_k$ is increased by a factor $\tau$ until the majorization condition holds. This procedure maps each physical power update to a descent step on the ambiguity-aware BRB objective, ensuring monotonic decrease and convergence to a feasible stationary point under standard regularity conditions \cite{Sun_MM_2017}.

\subsection{Computational Complexity and Engineering Feasibility}

Computational latency is critical for practical 6G initial access. Classical beamforming methods based on semidefinite relaxation \cite{Luo_SDR_2010} require lifted covariance optimization and can scale as $\mathcal{O}(M^{4.5})$ with interior-point solvers. In contrast, the proposed backtracking MM framework optimizes only the two-dimensional vector $\mathbf{u}\in\mathbb{R}^2$. Since $\mathcal{F}$ is a two-dimensional polygon, the projection $\mathcal{P}_{\mathcal{F}}(\cdot)$ has constant per-iteration complexity. The remaining cost is dominated by evaluating the BRB objective and its gradient over the selected test points. 

Regarding scalability to large-scale arrays ($M \gg 1$), although increasing $M$ narrows the mainlobe, the number of dominant ambiguity peaks that dictate threshold breakdown remains small. By adopting a truncated test-point strategy that retains only the $L$ most prominent sidelobe peaks adjacent to the mainlobe (e.g., $L \le 6$), the dimension of the Barankin matrix $\boldsymbol{\mathcal{B}}(\theta) \in \mathbb{R}^{L \times L}$ is kept strictly low-dimensional and decoupled from $M$. Coupled with the diagonal regularization $\delta \mathbf{I}_L$ and truncated singular value decomposition (SVD), this guarantees numerical stability and constant matrix-inversion complexity even for massive array configurations. As shown in the simulations, the algorithm typically converges within 10 iterations, supporting low-latency power allocation within the THz channel coherence interval.

\begin{figure}[t]
\centering
\includegraphics[width=\columnwidth]{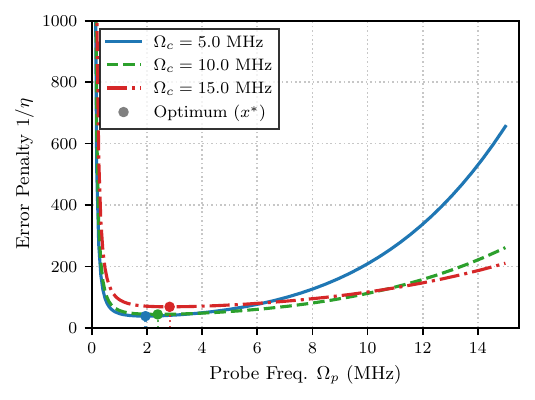}
\caption{Validation of the EIT readout trade-off among conversion gain, power broadening, and PSN.}
\label{fig:mechanism}
\end{figure}

\section{Simulation Results} \label{sec:simulation}

We evaluate the proposed framework using an RAQ-ULA with $M=10$, $\Delta d=0.5\lambda_c$, and $N=20$ snapshots. The hardware parameters are set as $P_{\mathrm{total}}=100~\mathrm{MHz}^2$, $\gamma=0.5$, $\Gamma_{\mathrm{EIT}}=6.0~\mathrm{MHz}$, $N_{\mathrm{bg}}=0.1$, and $c_{\mathrm{shot}}=0.2$. Truncated SVD with a tolerance of $10^{-10}$ is used to stabilize the Barankin matrix inversion. The empirical maximum likelihood estimator (MLE) is implemented on a dense spatial grid to reduce quantization errors.

Fig.~\ref{fig:mechanism} validates the physical trade-off characterized in Theorem~\ref{thm:optimal_power}. The estimation penalty $1/\eta$ exhibits a U-shaped dependence on the probe Rabi frequency $\Omega_p$, reflecting the competition among conversion-gain enhancement, power broadening, and PSN growth. The analytical optimum $\Omega_p^*$ closely matches the numerical minimizer, confirming that simply increasing the probe power does not necessarily improve quantum-array sensitivity. Instead, an interior operating point is required to balance the EIT response and the optical noise floor.

Fig.~\ref{fig:breakdown} compares stochastic performance bounds for DOA estimation. The stochastic CRB (Sto-CRB) provides a tight local bound in the high-SNR regime, but becomes overly optimistic at low SNR because it cannot capture finite-separation ambiguity errors. In contrast, the stochastic BRB (Sto-BRB) captures the nonlinear threshold breakdown around $4$ dB and better matches the empirical MLE behavior. This indicates that finite-separation likelihood ratios are essential for modeling ambiguity-dominated errors. The gap between the MLE and BRB in the transition region ($5$--$10$ dB) is mainly due to finite-snapshot effects, where random likelihood fluctuations may drive the estimator toward nearby ambiguity peaks. In the pure-noise region, the MLE approaches the saturation level $V_{\mathrm{arc}}$ induced by the bounded array manifold, which supports the proposed prior-calibrated BRB formulation.

Fig.~\ref{fig:convergence} shows the convergence behavior of the proposed backtracking MM algorithm from three initializations: A near the hardware boundary, B under an imbalanced EIT condition, and C from a low-power point. For all initializations, the Lipschitz surrogate in \eqref{eq:surrogate} yields a monotonic decrease in the BRB objective. The algorithm converges to a similar stationary value within 10 iterations, indicating stable descent despite the nonconvex BRB structure and supporting low-latency THz beam alignment.

\begin{figure}[t]
\centering
\includegraphics[width=\columnwidth]{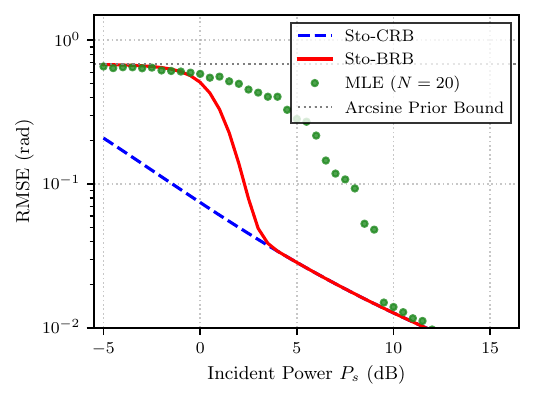}
\caption{Threshold-breakdown prediction using the stochastic BRB.}
\label{fig:breakdown}
\end{figure}

Fig.~\ref{fig:gain} compares the proposed optimization with heuristic laser-power allocation baselines. Equal-power allocation experiences threshold breakdown around $7$ dB because it does not explicitly balance power broadening and PSN growth. Maximum-power allocation can delay the breakdown by strengthening the received response, but excessive laser power broadens the atomic resonance linewidth and degrades the EIT readout condition. In contrast, the proposed MM-based allocation achieves a better trade-off by directly minimizing the prior-calibrated BRB, thereby balancing the PSN floor against the power-broadening penalty. To quantify the energy efficiency and physical operating state, the proposed MM algorithm converges to the optimal power allocation $(x^*, y^*) \approx (5.86, 94.14)~\mathrm{MHz}^2$, corresponding to laser Rabi frequencies $(\Omega_p^*, \Omega_c^*) \approx (2.42, 9.70)~\mathrm{MHz}$. In contrast, the maximum-power baseline saturates at the boundary $(x, y) = (\gamma P_{\mathrm{total}}/(1+\gamma), P_{\mathrm{total}}/(1+\gamma)) = (33.33, 66.67)~\mathrm{MHz}^2$ with $\Omega_p \approx 5.77~\mathrm{MHz}$. The proposed method restricts the probe intensity to an interior optimum, avoiding probe-induced power broadening and reducing the photon shot noise variance $\sigma^2$ by over $70\%$ compared to the saturated probe regime. Furthermore, when compared with the constrained SNR-maximization baseline derived from Theorem~\ref{thm:optimal_power}, which optimizes only the local Fisher curvature, the proposed BRB minimization directly suppresses spatial ambiguity lobes in the threshold region (around $0$--$5$~dB), achieving significantly lower estimation RMSE and extending the reliable operating range under severe link-budget deficit.

\begin{figure}[t]
\centering
\includegraphics[width=\columnwidth]{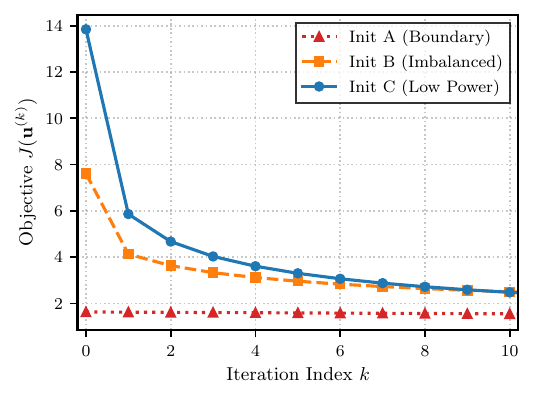}
\caption{Monotonic convergence of the backtracking MM algorithm.}
\label{fig:convergence}
\end{figure}

\section{Conclusion} \label{sec:conclusion}

This paper proposed a physics-aware power optimization framework for RAQ-ULAs in THz beam alignment. A closed-form stochastic multipoint BRB matrix with arcsine-prior calibration was derived to characterize finite-separation ambiguity errors and bounded-domain error saturation. By incorporating an EIT readout mapping, the framework links laser Rabi frequencies to the array covariance through quantum conversion gain, PSN, and power broadening. A backtracking MM algorithm was developed to minimize the prior-calibrated BRB under physical laser-power constraints. Simulations demonstrated improved threshold-breakdown prediction, monotonic objective decrease, and an enlarged reliable operating region under severe THz attenuation. Furthermore, while the current framework focuses on spatial sensing, integrating data transmission constraints to support Integrated Sensing and Communication (ISAC) paradigms \cite{Yuan_ISAC_2024} represents an important future direction. In addition, extending this physics-aware allocation methodology to distributed architectures, such as multi-agent trajectory optimization in UAV-assisted cell-free massive MIMO networks \cite{Liu_UAV_2025}, presents a promising avenue for complex network topologies.


\begin{thebibliography}{23}

\bibitem{Ai_2025_6G}
B.~Ai, Y.~Lu, Y.~Fang, D.~Niyato, R.~He, W.~Chen, J.~Zhang, G.~Ma, Y.~Niu, and Z.~Zhong, ``6G-enabled smart railways,'' \textit{Proc. IEEE}, vol.~113, no.~10, pp.~1192--1235, Oct.~2025.

\bibitem{Heath_2016}
R.~W. Heath, N.~Gonzalez-Prelcic, S.~Rangan, W.~Roh, and A.~M. Sayeed, ``An overview of signal processing techniques for millimeter wave MIMO systems,'' \textit{IEEE J. Sel. Topics Signal Process.}, vol.~10, no.~3, pp.~436--453, Apr.~2016.

\bibitem{Gong_RAQR_2026}
T.~Gong, J.~Sun, C.~Yuen, G.~Hu, Y.~Zhao, Y.~L.~Guan, C.~M.~S. See, M.~Debbah, and L.~Hanzo, ``Rydberg atomic quantum receivers for classical wireless communications and sensing: Their models and performance,'' \textit{IEEE Trans. Commun.}, vol.~74, pp.~7759--7778, Apr.~2026.

\bibitem{Liu_2022}
Z.-K.~Liu, L.-H.~Zhang, B.~Liu, Z.-Y.~Zhang, G.-C.~Guo, D.-S.~Ding, and B.-S.~Shi, ``Deep learning enhanced Rydberg multifrequency microwave recognition,'' \textit{Nat. Commun.}, vol.~13, Art.~no.~1997, Apr.~2022.

\bibitem{Richardson_2025}
D.~Richardson, J.~Dee, B.~N.~Kayim, B.~C.~Sawyer, R.~Wyllie, R.~T.~Lee, and R.~S.~Westafer, ``Study of angle of arrival estimation with linear arrays of simulated Rydberg atom receivers,'' \textit{APL Quantum}, vol.~2, no.~1, Art.~no.~016123, Feb.~2025.

\bibitem{Meyer_2021}
D.~H. Meyer, C.~O'Brien, D.~P. Fahey, K.~C. Cox, and P.~D. Kunz, ``Optimal atomic quantum sensing using electromagnetically-induced-transparency readout,'' \textit{Phys. Rev. A}, vol.~104, Art.~no.~043103, Oct.~2021.

\bibitem{Fleischhauer_2005}
M.~Fleischhauer, A.~Imamoglu, and J.~P. Marangos, ``Electromagnetically induced transparency: Optics in coherent media,'' \textit{Rev. Mod. Phys.}, vol.~77, no.~2, pp.~633--673, Jul.~2005.

\bibitem{Krim_1996}
H.~Krim and M.~Viberg, ``Two decades of array signal processing research: The parametric approach,'' \textit{IEEE Signal Process. Mag.}, vol.~13, no.~4, pp.~67--94, Jul.~1996.

\bibitem{VanTrees_2002}
H.~L. Van Trees, \textit{Optimum Array Processing: Part IV of Detection, Estimation, and Modulation Theory}. New York, NY, USA: Wiley-Interscience, 2002.

\bibitem{Wang_BeamSquint_2019}
B.~Wang, M.~Jian, F.~Gao, G.~Y. Li, and H.~Lin, ``Beam squint and channel estimation for wideband mmWave massive MIMO-OFDM systems,'' \textit{IEEE Trans. Signal Process.}, vol.~67, no.~23, pp.~5893--5908, Dec.~2019.

\bibitem{Xu_NearField_2024}
B.~Xu, J.~Zhang, H.~Du, Z.~Wang, Y.~Liu, D.~Niyato, B.~Ai, and K.~B. Letaief, ``Resource allocation for near-field communications: Fundamentals, tools, and outlooks,'' \textit{IEEE Wireless Commun.}, vol.~31, no.~5, pp.~42--50, Oct.~2024.

\bibitem{xu2025channelestimation}
B.~Xu, J.~Zhang, Z.~Chen, B.~Cheng, Z.~Liu, Y.-C.~Wu, and B.~Ai, ``Channel estimation for Rydberg atomic receivers,'' \textit{IEEE Wireless Commun. Lett.}, vol.~14, no.~9, pp.~2957--2961, Sep.~2025.

\bibitem{Meyer_Wideband_2020}
D.~H. Meyer, Z.~A. Castillo, K.~C. Cox, and P.~D. Kunz, ``Assessment of Rydberg atoms for wideband electric field sensing,'' \textit{J. Phys. B, At. Mol. Opt. Phys.}, vol.~53, no.~3, Art.~no.~034001, Jan.~2020.

\bibitem{Stoica_1990}
P.~Stoica and A.~Nehorai, ``Performance study of conditional and unconditional direction-of-arrival estimation,'' \textit{IEEE Trans. Acoust., Speech, Signal Process.}, vol.~38, no.~10, pp.~1783--1795, Oct.~1990.

\bibitem{Chaumette_2008}
E.~Chaumette, J.~Galy, A.~Quinlan, and P.~Larzabal, ``A new Barankin bound approximation for the prediction of the threshold region performance of maximum-likelihood estimators,'' \textit{IEEE Trans. Signal Process.}, vol.~56, no.~11, pp.~5319--5333, Nov.~2008.

\bibitem{Liu_NearField_2024}
M.~Liu, Y.~Zhang, Y.~Jin, K.~Zhi, and C.~Pan, ``Towards near-field communications for 6G: Challenges and opportunities,'' \textit{ZTE Commun.}, vol.~22, no.~1, pp.~3--15, Mar.~2024.

\bibitem{Lu_NearField_2022}
H.~Lu and Y.~Zeng, ``Communicating with extremely large-scale array/surface: Unified modeling and performance analysis,'' \textit{IEEE Trans. Wireless Commun.}, vol.~21, no.~6, pp.~4039--4053, Jun.~2022.

\begin{figure}[t]
\centering
\includegraphics[width=\columnwidth]{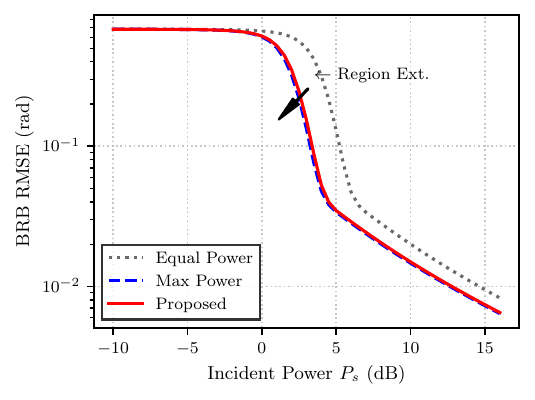}
\caption{Extension of the reliable operating region compared with heuristic power allocations.}
\label{fig:gain}
\end{figure}

\bibitem{Liu_CellFree_2024}
Z.~Liu, J.~Zhang, Z.~Liu, H.~Du, Z.~Wang, D.~Niyato, M.~Guizani, and B.~Ai, ``Cell-free XL-MIMO meets multi-agent reinforcement learning: Architectures, challenges, and future directions,'' \textit{IEEE Wireless Commun.}, vol.~31, no.~4, pp.~155--162, Aug.~2024.

\bibitem{Renaux_Prior_2008}
A.~Renaux, P.~Forster, P.~Larzabal, C.~D. Richmond, and A.~Nehorai, ``A fresh look at the Bayesian bounds of the Weiss-Weinstein family,'' \textit{IEEE Trans. Signal Process.}, vol.~56, no.~11, pp.~5334--5352, Nov.~2008.

\bibitem{Sun_MM_2017}
Y.~Sun, P.~Babu, and D.~P. Palomar, ``Majorization-minimization algorithms in signal processing, communications, and machine learning,'' \textit{IEEE Trans. Signal Process.}, vol.~65, no.~3, pp.~794--816, Feb.~2017.

\bibitem{Luo_SDR_2010}
Z.-Q.~Luo, W.-K.~Ma, A.~M.-C. So, Y.~Ye, and S.~Zhang, ``Semidefinite relaxation of quadratic optimization problems,'' \textit{IEEE Signal Process. Mag.}, vol.~27, no.~3, pp.~20--34, May~2010.

\bibitem{Yuan_ISAC_2024}
J.~H. Yuan, Z.~S. Fei, and Z.~Q. Wei, ``Integrated sensing and communication (ISAC) technologies for future wireless communication,'' \textit{ZTE Commun.}, vol.~22, no.~3, pp.~1--3, Sep.~2024.

\bibitem{Liu_UAV_2025}
Z.~Liu, J.~Zhang, Y.~Zeng, and B.~Ai, ``Energy-efficient multi-agent reinforcement learning for UAV trajectory optimization in cell-free massive MIMO networks,'' \textit{IEEE Trans. Wireless Commun.}, vol.~24, no.~7, pp.~5917--5930, Jul.~2025.

\end{thebibliography}
\end{document}